\titleformat*{\section}{\bf\centering} 
\titleformat*{\subsection}{\centering\bf} 
\titleformat*{\subsubsection}{\centering\it} 
\newtheorem{teo}{Theorem}[section]
\newtheorem{prop}[teo]{Proposition}
\newtheorem{cor}[teo]{Corollary}
\theoremstyle{remark}
\theoremstyle{definition}
\newcommand{\Hr}{\mathcal{C}}
\newcommand{\orb}{\mathcal{O}}
\newcommand{\ism}{\mathcal{I}}
\newcommand{\vs}{\vspace{.2cm}}
\newcommand{\cl}{{\rm cl}}
\newcommand{\be}{\begin{equation}}
\newcommand{\ee}{\end{equation}}
\begin{document}

\begin{center}

{\Large\bf A classification theorem for compact 
\vspace{.2cm}

Cauchy horizons in vacuum spacetimes}

\vspace{.5cm}

{\sc Ignacio Bustamante Bianchi}

{ibustamante@cmat.edu.uy}

\vspace{.3cm}

{\sc Mart\'in Reiris Ithurralde}

{mreiris@cmat.edu.uy}

\vspace{.4cm}

{\it Centro de Matem\'atica, Universidad de la Rep\'ublica,} 

{\it Montevideo, Uruguay.}
\vspace{.4cm}

\end{center}

\begin{abstract}
We establish a complete classification theorem for the topology and for the null generators of compact non-degenerate Cauchy horizons of time orientable smooth vacuum $3+1$-spacetimes.  We show that, either: (i) all generators are closed, or (ii) only two generators are closed and any other densely fills a two-torus, or (iii) every generator densely fills a two-torus, or (iv) every generator densely fills the horizon. We then show that, respectively to (i)-(iv), the horizon's manifold is either: (i') a Seifert manifold, or (ii') a lens space, or (iii') a two-torus bundle over a circle, or, (iv') a three-torus. All the four possibilities are known to arise in examples. In the last case, (iv), (iv'), we show in addition that the spacetime is indeed flat Kasner, thus settling a problem posed by Isenberg and Moncrief for ergodic horizons. The results of this article open the door for a full parameterization of the metrics of all vacuum spacetimes with a compact Cauchy horizon. The method of proof permits direct generalizations to higher dimensions. 
\end{abstract}

\section{Introduction}

The occurrence of Cauchy horizons in cosmological spacetimes beyond which predictability (from the initial data) fails, is one of the most intriguing features of the General Theory of Relativity closely related to the Strong Cosmic Censorship conjecture. In this regard, along the past decades there have been significant efforts to find necessary conditions under which they form and to provide a clear picture of their nature when they do. Motivated by these questions, in this paper we take advantage of recent results by the authors in \cite{reiris2020existence}, by Petersen in \cite{petersen2018wave} and by Petersen-R\'acz in \cite{petersen2018symmetries}, to establish a stringent list of the topologies and of the orbital type of the generators that non-degenerate Cauchy horizons can have. As a byproduct we prove that non-degenerate ergodic Cauchy horizons, that is, those having a dense generator, are just a quotient of the flat Kasner spacetime. This answer a question posed by Isenberg and Moncrief in \cite{moncrief2020symmetries}. We move now to describe in detail the setup and state the main results. We comment also on previous results in the literature.
\vs

We let $(M,g)$ be a smooth, connected time-oriented $3+1$-dimensional vacuum spacetime having a compact Cauchy horizon $\Hr$. We assume that the horizon $\Hr$ divides $M$ into two connected disjoint regions $I$ and $H$, i.e. $M\setminus \Hr=I\cup H$, where $H$ is a maximal globally hyperbolic spacetime with a compact and boundaryless Cauchy surface $\Sigma$. In this context $\Hr$ is a smooth \cite{larsson2015smoothness}, \cite{minguzzi2014completeness}, \cite{minguzzi2015area}, and totally geodesic null hypersurface, hence ruled by inextensible null geodesics called the {\it null generators}. We say that $\Hr$ is non-degenerate if there is at least one future or past incomplete generator (i.e. its affine length is finite). 

Under this setup, the following is the main result.

\begin{teo}\label{princ} Let $\Hr$ be a non-degenerate compact Cauchy horizon inside a smooth, time-orientable vacuum spacetime. Then, one of the following holds,
\begin{enumerate}[\rm (i)]
\item\label{i} all generators are closed,
\item\label{ii} only two generators are closed and every other generator densely fills a two-torus,
\item\label{iii} every generator densely fills a two-torus,
\item\label{iv} every generator densely fills the horizon.
\end{enumerate}
\end{teo}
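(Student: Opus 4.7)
The plan is to use the recent existence of Killing symmetries on non-degenerate Cauchy horizons to reduce the classification of generators to the orbit structure of a nowhere-vanishing Killing-type vector field on the compact $3$-manifold $\Hr$, and then to invoke the classical structure theory of torus actions on $3$-manifolds.

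First I would apply the results of Petersen \cite{petersen2018wave}, of Petersen--R\'acz \cite{petersen2018symmetries}, and of the authors \cite{reiris2020existence} to produce a nontrivial Killing vector field $K$ on a spacetime neighborhood of $\Hr$, tangent to $\Hr$, whose integral curves on $\Hr$ are reparametrizations of the null generators; in particular $K$ vanishes nowhere on $\Hr$. Averaging a smooth Riemannian metric on $\Hr$ over the flow $\{\varphi^K_t\}_{t\in\mathbb R}$ (which has uniformly bounded derivatives of all orders by compactness of $\Hr$ and the Killing equation), or invoking Bochner--Montgomery directly, produces a $K$-invariant Riemannian metric on $\Hr$; by Myers--Steenrod the isometry group is then a compact Lie group, and the closure of $\{\varphi^K_t\}$ therein is a torus $T^k$ acting smoothly on $\Hr$. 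Since $K$ is nowhere zero, the $T^k$-action has no fixed points, forcing $k\in\{1,2,3\}$, and by density of the one-parameter subgroup in $T^k$ the $K$-orbit through any $p\in\Hr$ is dense in the $T^k$-orbit through $p$.

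Next I would perform a case analysis in $k$. If $k=1$ every $T^1$-orbit is a circle (no fixed points), so every generator is closed, giving \textup{(i)}. If $k=3$ any $3$-dimensional $T^3$-orbit is open and closed in $\Hr$, hence equals $\Hr$ by connectedness, so every generator is dense in $\Hr$, giving \textup{(iv)}. For $k=2$ the principal orbits are $2$-tori and any exceptional orbit is a circle $T^2/S^1$; a slice argument shows that $\Hr/T^2$ is a compact connected $1$-manifold with boundary, so either $S^1$ or a closed interval. When $\Hr/T^2=S^1$ all orbits are principal, and each generator, having irrational slope with respect to the $T^2$-lattice (else the closure of its flow would be a circle instead of $T^2$), densely fills its $2$-torus, giving \textup{(iii)}. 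When $\Hr/T^2$ is an interval, the two boundary points correspond to exactly two exceptional circle orbits; the Lie-algebra direction of $K$ is not contained in the isotropy subalgebra at those points (else $K$ would vanish there), so $K$ descends to a nonzero translation on each $T^2/S^1\cong S^1$ and generates it fully, producing two closed generators, while all other orbits densely fill a $2$-torus, giving \textup{(ii)}.

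The main obstacle is the case $k=2$: one needs the precise slice description of $T^2$-actions on $3$-manifolds without fixed points (Orlik--Raymond) to rule out interior singular strata of the orbit space and to conclude that at most two exceptional circle orbits can occur, together with the isotropy analysis that ensures $K$ does not close up prematurely on those circles. A secondary issue is rigorously verifying that the closure of $\{\varphi^K_t\}$ is a compact torus; this can be handled by the averaging/Myers--Steenrod argument sketched above, by Bochner--Montgomery applied to compact transformation groups, or by exploiting the rigidity of Killing fields on vacuum spacetimes.
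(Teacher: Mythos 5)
Your reduction of the theorem to the orbit structure of a compact torus action on the $3$-manifold $\Hr$, and the subsequent case analysis in $k=\dim T^{k}$, matches the paper's strategy (its Theorem \ref{rm}); the paper handles the $k=2$ singular-orbit analysis by an explicit length-minimizing geodesic argument rather than by citing Orlik--Raymond, but either route is acceptable. The genuine gap is in the step where you produce the $K$-invariant Riemannian metric on $\Hr$. Averaging over the flow $\{\varphi^{K}_{t}\}_{t\in\mathbb{R}}$ requires an invariant mean on $\mathbb{R}$ together with bounds on the derivatives of $\varphi^{K}_{t}$ that are \emph{uniform in $t$}, and neither compactness of $\Hr$ (which only bounds derivatives on compact time intervals) nor the spacetime Killing equation supplies these: the restriction of $g$ to $\Hr$ is the \emph{degenerate} tensor $h$, so $\mathcal{L}_{K}g=0$ gives no control of $d\varphi^{K}_{t}$ in the null direction. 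A nowhere-zero vector field on a compact $3$-manifold need not admit any invariant Riemannian metric (Anosov flows are the standard obstruction), so this is precisely the point where the geometry of the horizon must enter. Bochner--Montgomery does not help either: it upgrades a locally compact transformation group to a Lie group but does not make the closure of $\{\varphi^{K}_{t}\}$ compact.

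The paper closes this gap with two specific inputs. First, by \cite{reiris2020existence} one may choose the null tangent field $V$ with $\nabla_{V}V=-V$ (constant, nonzero surface gravity), so that the one-form defined by $\nabla_{X}V=\omega_{V}(X)V$ satisfies $\omega_{V}(V)=-1$ and is nowhere zero. Second (Petersen \cite{petersen2018wave}), the explicit Riemannian metric $\sigma=h+\omega_{V}\otimes\omega_{V}$ is $V$-invariant: $\mathcal{L}_{V}h=0$ holds because $\Hr$ is totally geodesic, and $\mathcal{L}_{V}\omega_{V}=\iota_{V}d\omega_{V}+d\big(\omega_{V}(V)\big)=0$ by Cartan's formula, since $\omega_{V}$ is null closed ($d\omega_{V}(V,\cdot)=0$) and $\omega_{V}(V)$ is constant. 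Only with this invariant metric in hand do Myers--Steenrod and the torus-closure argument become available; from that point on, your case analysis goes through essentially as in the paper.
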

A direct consequence will be the following topological classification of Cauchy horizons.
\begin{cor}\label{cprinc} Let $\Hr$ be a non-degenerate compact Cauchy horizon inside a smooth, time-orientable vacuum spacetime. Then, respectively to the cases {\rm (\ref{i})}-{\rm (\ref{iv})} in Theorem \ref{princ}, we have,
\begin{enumerate}[\rm (i')]
\item if {\rm (\ref{i})} holds, then $\Hr$ is a Seifert manifold,
\item if {\rm (\ref{ii})} holds, then $\Hr$ is a lens space,
\item if {\rm (\ref{iii})} holds, then $\Hr$ is a  $\mathbb{T}^2$-bundle over $\mathbb{S}^1$,
\item if {\rm (\ref{iv})} holds, then $\Hr$ is a three-torus $\mathbb{T}^3$. 
\end{enumerate}
\end{cor}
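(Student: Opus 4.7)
The plan is to exhibit on $\Hr$ an effective isometric action of a torus $\mathbb{T}^k$ whose orbits coincide with the closures of the null generators, and then to read off the topology of $\Hr$ case by case from the classification of $\mathbb{T}^k$-actions on closed orientable $3$-manifolds.

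First I would invoke the symmetry-extension results cited in the introduction (Petersen \cite{petersen2018wave}, Petersen-R\'acz \cite{petersen2018symmetries}) to upgrade the null generator vector field $K$ on $\Hr$ to the restriction of a Killing field of $(M,g)$ defined on a neighborhood of $\Hr$. In particular $K$ preserves a smooth Riemannian metric $h$ on $\Hr$, so the closure $T$ of its flow inside the compact Lie group $\mathrm{Isom}(\Hr,h)$ is a connected abelian compact Lie subgroup, hence a torus $T\cong \mathbb{T}^k$ of some rank $k\ge 1$. By construction, the $T$-orbit of a point $p$ coincides with the closure in $\Hr$ of the null generator through $p$, so Theorem \ref{princ} dictates the structure of the orbits of this action.

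Second I would match the rank $k$ to the four cases: case (\ref{i}) has all orbits $\mathbb{S}^1$, so $k=1$; cases (\ref{ii}) and (\ref{iii}) have generic orbit a $\mathbb{T}^2$, so $k=2$; case (\ref{iv}) has a dense orbit, so the action is transitive and $k=3$. The topological identifications then follow from standard facts. In (i'), a smooth effective $\mathbb{S}^1$-action on a closed orientable $3$-manifold defines a Seifert fibration. In (iv'), a transitive $\mathbb{T}^3$-action on a connected $3$-manifold forces $\Hr\cong \mathbb{T}^3$, since the stabilizer is discrete and normal in the connected abelian group $\mathbb{T}^3$, hence trivial by effectiveness. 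In (iii'), a $\mathbb{T}^2$-action whose orbits are all $2$-tori is locally free with a $1$-dimensional compact quotient, so $\Hr/\mathbb{T}^2\cong \mathbb{S}^1$ and $\Hr$ is a $\mathbb{T}^2$-bundle over $\mathbb{S}^1$. In (ii'), a $\mathbb{T}^2$-action with exactly two circle orbits $\gamma_1,\gamma_2$ and $2$-torus orbits elsewhere admits, by the equivariant tubular neighborhood theorem, a $\mathbb{T}^2$-equivariant solid-torus neighborhood of each $\gamma_i$; the complement is then equivariantly $\mathbb{T}^2\times [0,1]$, so $\Hr$ decomposes as the union of two solid tori glued along their boundary and is therefore a lens space (possibly degenerating to $\mathbb{S}^3$ or $\mathbb{S}^2\times\mathbb{S}^1$).

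I expect the main obstacle to lie in the first step, namely producing the closed torus action out of the single generator vector field. The delicate input is the existence of a Killing extension of $K$, which is exactly what the Petersen-R\'acz results deliver, together with the observation that any $K$-invariant Riemannian metric on $\Hr$ places the closure of $\{\phi_t^K\}$ inside the compact Lie group $\mathrm{Isom}(\Hr,h)$. Once that torus action is in hand, the remainder reduces to classical facts about circle and torus actions on closed $3$-manifolds (Epstein, Orlik-Raymond) and requires no further Lorentzian input beyond Theorem \ref{princ}.
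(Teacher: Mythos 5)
Your overall architecture is the same as the paper's: realize the closures of the generators as orbits of a compact torus of isometries of an auxiliary Riemannian metric on $\Hr$, then read off the topology from the orbit structure. Your case-by-case topological identifications (Seifert fibration from a fixed-point-free circle action; two solid tori glued along a $\mathbb{T}^2$ giving a lens space; locally free $\mathbb{T}^2$-action giving a bundle over $\mathbb{S}^1$; transitive effective $\mathbb{T}^3$-action forcing $\Hr\cong\mathbb{T}^3$) coincide with what the paper does in Theorem \ref{rm} and Corollary \ref{cr}, and your observation that the principal isotropy of an effective abelian action is trivial correctly pins down the rank $k$ in each case.

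The genuine gap is in your first step, where you write that since $K$ extends to a spacetime Killing field, ``in particular $K$ preserves a smooth Riemannian metric $h$ on $\Hr$.'' This does not follow. The metric that $\Hr$ inherits from $g$ is \emph{degenerate} ($\Hr$ is a null hypersurface, and $K$ spans the kernel of $h$), so the isometry group you would like to compactify is not the isometry group of a Riemannian metric; moreover the Petersen--R\'acz extension produces the Killing field only on the globally hyperbolic side, not on a two-sided neighbourhood of $\Hr$. Producing a genuine $K$-invariant \emph{Riemannian} metric on $\Hr$ is precisely the nontrivial input, and it cannot be obtained by averaging either, since the flow is an a priori non-compact $\mathbb{R}$-action. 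The paper fills this gap with Petersen's metric $\sigma = h + \omega_V\otimes\omega_V$, where $V$ is the generator field with constant surface gravity, $\nabla_V V=-V$, whose existence is the main result of \cite{reiris2020existence}. That normalization is essential twice over: it makes $\omega_V$ nowhere vanishing (so $\sigma$ is positive definite), and it makes $\omega_V(V)\equiv -1$ constant, which together with the null-closedness $d\omega_V(V,\cdot)=0$ gives $\mathcal{L}_V\omega_V=0$ and hence $\mathcal{L}_V\sigma=0$ (Proposition \ref{OPO}, via Cartan's formula). With $\sigma$ in hand your argument goes through; without it, the compact torus $T$ you build your proof on is not available. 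Note also that the spacetime Killing extension is not needed for this corollary at all --- it is used in the paper only for Corollary \ref{Kasner}.
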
    

All four possibilities in the theorem and the corollary are well known to arise in examples. For example, the Taub-NUT spacetime (see \cite{1967rta1.book..160M}) is an instance of (\ref{i}) and Gowdy Cauchy horizons are instances of (\ref{ii}) and (\ref{iii}) (see \cite{Chrusciel:2003jj} and references therein). An elaborated discussion of possibilities can be found in the comprehensive article of Chrusciel and Rendall \cite{1995AnPhy.242..349C}. Finally, suitable quotients of the flat Kasner spacetime,
\be\label{UniversalKasner}
g=-dt^{2}+t^{2}dx^{2}+dy^{2}+dz^{2},\quad (t,x,y,z)\in (0,\infty)_{t}\times \mathbb{R}^{3}_{x,y,z},
\ee
by a three-dimensional lattice in the $x,y$ and $z$ directions are instances of (\ref{iv}). Of course in this case the quotient must be in such a way that for every given $t_{0},y_{0},z_{0}$ the projection of the line $x\rightarrow (t_{0},x,y_{0},z_{0})$ is dense inside the Cauchy surface $\{t=t_{0}\}$. We do not know at the moment if every Seifert manifold, every lens space, and every $\mathbb{T}^{2}$-bundle over $\mathbb{S}^{1}$ is indeed the Cauchy horizon of a vacuum spacetime, (the complicated case seems (\ref{i})). 

Topological constraints compatible with those of Corollary \ref{cprinc} were first obtained in the interesting work of Alan Rendall \cite{1995dg.ga....10002R} where it was shown the rather strong property that Cauchy horizons can be collapsed in volume with finite curvature and diameter (for a detailed account on such constraints see \cite{MR950552}). More recently, and regarding the orbital structure of the generators, Isenberg and Moncrief \cite{moncrief2020symmetries} have shown that, for analytic spacetimes at least, there is a trichotomy as follows: either, (i) all the generators are closed, or (ii) a dense set of generators are dense in two-tori, or, (iii) there is at least one dense generator in $\Hr$. Theorem \ref{princ} and Corollary \ref{cprinc}, enclose and refine all the conclusions in these works.
\vs

The proofs of Theorem \ref{princ} and Corollary \ref{cprinc} combine four pieces of information. First, they use that the temperature of every compact non-degenerate Cauchy horizon can be normalized to a non-zero constant. This is a new result proved by the authors in \cite{reiris2020existence}, and means that there is a nowhere zero vector field $V$ on $\Hr$ tangent to the null generators such that $\nabla_{V}V=-V$. Note that, as $V$ is nowhere zero, the orbits of $V$ are the orbits of the null generators. Second, they use an important observation due to Oliver Petersen, stating that the Riemannian metric on $\Hr$ given by $\sigma=h+\omega\otimes \omega$ has $V$ as a Killing field, where here $h$ is the degenerate metric on $\Hr$ inherited from $g$, and $\omega$ is the one-form on $\Hr$ defined by $\nabla_{X}V=:\omega(X)V$, (see \cite{moncrief2020symmetries}). This follows from the well known fact that $\mathcal{L}_{V}h=0$ (holding for any vector field $V$) plus the invariance of $\omega$ under the flow of $V$, namely that $\mathcal{L}_{V}\omega=0$. We provide a simple computation of this last invariance in Proposition \ref{OPO}. The Riemannian metric $\sigma$ is fundamental to describe the orbital types of $V$, hence of the null generators, by means of standard results on isometric actions of Lie groups. So the last pieces of information have to do with that and are the following. The first is a nice observation in Riemannian geometry, recalled by Isenberg and Moncrief in \cite{isenberg1992spacetimes} (Proposition 1), and stating that the closure of the Abelian group of isometries generated by $V$ is a compact connected Abelian Lie group $G$, hence isomorphic to a torus $\mathbb{T}^{n}$, with $n\geq 1$. We review such argument for the sake of completion. The second and fourth piece of information is a result about isometric actions by Lie groups, describing the orbital structure in terms of the so called principal orbits. We take this result from \cite{alexandrino2015lie} and apply it in Theorem \ref{rm} in the next section. The main Theorem \ref{princ} and the main Corollary \ref{cprinc} then follow directly applying Theorem \ref{rm} to the isometric action of $G\sim \mathbb{T}^{n}$ on $(\Hr,\sigma)$. Theorem \ref{princ} and Corollary \ref{cprinc} are proved in the third, and last, section. 
\vs

The results in this paper have also consequences on the number of $\mathbb{T}^{n}$-symmetries that the spacetime $(M,g)$ has. Indeed, a recent important result by Petersen and R\'acz has shown that the field $V$ on $\Hr$ (shown to exist in \cite{reiris2020existence}) can be extended to a spacetime Killing field inside the globally hyperbolic region (altogether these two results answered a conjecture by Isenberg and Moncrief, see \cite{reiris2020existence}). If the spacetime falls into the class (\ref{i}), then it is easy to see that there is a $\mathbb{T}^{1}=\mathbb{S}^{1}$ spacetime symmetry. On the other hand, for spacetimes falling into the cases (\ref{ii}) and (\ref{iii}), there is instead a $\mathbb{T}^{2}$-symmetry as was shown in Corollary 1.1 in \cite{petersen2018symmetries}. We will show here that in the last case (\ref{iv}) there is a $\mathbb{T}^{3}$ symmetry and, furthermore, that the spacetime is a quotient of the Kasner spacetime as described earlier. As already mentioned, this answers a question of Isenberg and Moncrief for the so called ergodic Cauchy horizons (see Sec. D in \cite{moncrief2020symmetries}). The result can be generalised to higher dimensions without difficulty, so we state and prove it in $n+1$.

\begin{cor}\label{Kasner}
Let $\Hr$ be an ergodic, compact and non-degenerate Cauchy horizon inside a time-oriented, $n+1$-dimensional vacuum spacetime $(M,g)$. Then, $(M,g)$ is a quotient of the flat Kasner spacetime (\ref{UniversalKasner}).
\end{cor}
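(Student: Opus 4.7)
The plan is to promote the horizon-level isometric torus action on $(\Hr,\sigma)$ to a genuine isometric action on the whole spacetime, and then to identify the resulting spatially homogeneous vacuum metric with flat Kasner. First, the $n+1$-dimensional analogue of Theorem \ref{princ} (whose proof, as noted in the abstract, extends without essential change to arbitrary dimension) applied in the ergodic case gives $\Hr\cong\mathbb{T}^n$, and forces the closure $G$ of the $V$-flow inside $\mathrm{Isom}(\Hr,\sigma)$ to be a torus $\mathbb{T}^n$ acting transitively on $\Hr$. Because $\mathcal{L}_V h=\mathcal{L}_V\omega=0$ (cf.\ Proposition \ref{OPO}), every element of $G$ preserves $h$ and $\omega$ individually, not just the sum $\sigma$.

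Next I would extend this action to the spacetime. By Petersen-R\'acz, $V$ extends to a Killing field $\tilde V$ on the globally hyperbolic region $H$, generating a one-parameter subgroup of $\mathrm{Isom}(H\cup\Hr,g)$. Since the latter is a finite-dimensional Lie group, the closure $\tilde G$ of this subgroup is a compact connected abelian Lie group. Every $\phi\in\tilde G$ preserves $\Hr$ as a set (since $\tilde V$ is tangent to $\Hr$), and the restriction map $\tilde G\to\mathrm{Isom}(\Hr,\sigma)$ has image containing the dense subgroup $\{\phi_t^V\}$, hence equal to $G\cong\mathbb{T}^n$. I therefore obtain a torus $\tilde G$ (of dimension at least $n$) acting isometrically on $(H\cup\Hr,g)$ and transitively on $\Hr$.

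Because the $\tilde G$-orbit through any $p\in\Hr$ is all of $\Hr$, the orbits in an open neighbourhood of $\Hr$ in $H$ remain $n$-dimensional and spacelike by continuity, foliating that neighbourhood by Cauchy surfaces diffeomorphic to $\mathbb{T}^n$. In the Gaussian time coordinate $t$ measured orthogonally from these orbits, $g$ takes the Bianchi~I form $g=-dt^2+\gamma_{ij}(t)\,dx^i dx^j$ with $x^i$ periodic coordinates on $\mathbb{T}^n$. The vacuum Einstein equations then reduce to a standard ODE system whose general solution is the Kasner family with exponents $(p_1,\dots,p_n)$ satisfying $\sum p_i=\sum p_i^2=1$. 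Smoothness of $\Hr$ as a regular null hypersurface requires the curvature to stay bounded as $t\to 0$; since the Kasner Weyl tensor diverges there unless $(p_1,\dots,p_n)$ is a permutation of $(1,0,\dots,0)$, only flat Kasner survives and (\ref{UniversalKasner}) is recovered. The universal cover of a neighbourhood of $\Hr$ is thereby isometric to flat Kasner, and $(M,g)$ is identified with one of its discrete quotients as claimed.

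The main obstacle is the second paragraph: Petersen-R\'acz supplies only the single extension $\tilde V$, while the conclusion requires the full $n$-dimensional spacetime symmetry. Passing from a one-parameter subgroup to its closure in $\mathrm{Isom}(H\cup\Hr,g)$, and checking that this closure restricts surjectively onto the horizon torus $G$, is the key non-routine step and depends on the Lie-group nature of the spacetime isometry group. The remaining Kasner identification, via regularity of the curvature at $t\to 0$, is then a direct computation in Bianchi~I vacuum cosmology.
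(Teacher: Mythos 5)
Your overall strategy (promote the horizon symmetry to a spacetime symmetry, then identify the resulting homogeneous vacuum metric) is the right one, but the step you yourself flag as ``key and non-routine'' contains a genuine gap that your argument does not close. You take the closure $\tilde G$ of the one-parameter group generated by $\tilde V$ inside $\mathrm{Isom}(H\cup\Hr,g)$ and assert it is a \emph{compact} connected abelian Lie group. That compactness is a Riemannian phenomenon: Myers--Steenrod gives compactness of the isometry group of a \emph{compact Riemannian} manifold, but the isometry group of a Lorentzian manifold (even a compact one, let alone the non-compact region $H\cup\Hr$) need not be compact --- one-parameter groups of boosts in Minkowski space are closed and non-compact. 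Without compactness you cannot conclude that $\tilde G$ is a torus, nor even that the limits $\lim_i\phi^{\tilde V}_{t_i}$, which you need in order to realize the horizon torus $G$ by spacetime isometries, exist in $\mathrm{Isom}(H\cup\Hr,g)$. So the surjectivity of the restriction $\tilde G\to G$ is exactly the unproven point, and the rest of your second and third paragraphs rests on it.

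The paper sidesteps this entirely by moving the closure argument back into a compact Riemannian setting. It constructs the $V$-invariant transverse null field $Z$ on $\Hr$ (normalized by $g(Z,V)=1$), flows along the null geodesics with initial velocity $Z$ for a small time $\tau_0$, and obtains a compact \emph{spacelike} hypersurface $\Sigma=\psi_{\tau_0}(\Hr)$ in $H$ to which the extended Killing field $V$ is tangent (since Killing fields map geodesics to geodesics and $\mathcal{L}_V Z=0$). On $(\Sigma,\overline h)$, with $\overline h$ the induced Riemannian metric, $V$ is a Killing field with dense orbits, so Proposition \ref{POLO} applies on this compact Riemannian manifold and yields a free transitive $\mathbb{T}^n$-action, a flat torus $(\Sigma,\overline h)$, and $\mathbb{T}^n$-invariant second fundamental form $K$. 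The conclusion then follows from uniqueness of the vacuum development of the homogeneous initial data $(\mathbb{T}^n;\overline h,K)$, which is a Kasner spacetime, necessarily the flat one since it must admit a Cauchy horizon --- the same final identification you make via curvature regularity at $t\to 0$. If you replace your second paragraph by this transverse-geodesic construction, your argument becomes essentially the paper's proof.
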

The proof of this corollary is given in the last section.
\vs

\noindent {\bf Acknowledgements}. The second author is greatly indebted to Oliver Petersen for letting him know about the important Riemannian metric $\sigma$ on $\Hr$ invariant under the flow of the vector field $V$ normalizing the surface gravity to a constant.

\section{Classification of the orbits of a Killing field}

Let $(M,g)$ be a $d$-dimensional smooth compact Riemannian manifold and suppose $V$ is a Killing vector field. Let $\varphi: M  \times \mathbb{R} \to M$ be the smooth flow defined by $V$, that is, the solution to the ODE, $d\varphi(p,z)/dz=V(\varphi(p,z))$, $\varphi(p,0)=p$, where $z$ is the parameter of the integral curves of $V$. For any $z\in \mathbb{R}$, let $\varphi_z: M\rightarrow M$ be the diffeomorphism given by $\varphi_{z}(p):=\varphi(p,z)$. Since $V$ is a Killing field and $\varphi_{s+z}=\varphi_{z}\circ\varphi_{s}=\varphi_{s}\circ\varphi_{z}$, then $\{\varphi_{z}:z\in \mathbb{R}\}$ is an Abelian subgroup of the group of isometries $\ism(M)$ of $(M,g)$.

In the arguments below we will use a few times the following well known facts. First, by the Myers-Steenrod theorem, the isometry group of a smooth compact Riemannian manifold is a compact Lie group \cite{myers1939group}. Second, by Cartan's theorem, a closed subgroup of a compact Lie group is a Lie group, (see Theorem 20.12 in \cite{Lee2012}). Third, by the classification theorem for Abelian Lie groups, an Abelian connected and compact Lie group is isomorphic to a torus, (see, Theorem 1.41 in \cite{alexandrino2015lie}). Finally, and fourth, the orbit of a point by the action of a compact Lie group on a complete manifold, is always an embedded submanifold \cite{alexandrino2015lie}.

We denote by $\cl(A)$ the closure of a subset $A$ of a manifold. 

\begin{prop}[From \cite{isenberg1992spacetimes}] Let $H$ be an Abelian connected subgroup of a compact Lie group $G$. Then $\cl(H)$ is an Abelian, connected, compact Lie subgroup of $G$, hence isomorphic to $\mathbb{T}^{n}$, for some $n\geq 1$.
\end{prop}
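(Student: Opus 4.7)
The plan is to assemble the statement by verifying, in order, each of the properties claimed for $\cl(H)$: that it is a subgroup of $G$, that it is closed (hence compact and a Lie subgroup by Cartan), that it is Abelian, and that it is connected. Once those four properties are in hand, the classification of connected compact Abelian Lie groups, cited as the third background fact in the excerpt (Theorem 1.41 in \cite{alexandrino2015lie}), gives the final isomorphism with $\mathbb{T}^{n}$.

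First I would show that $\cl(H)$ is a subgroup. The multiplication map $\mu: G\times G\to G$ and the inversion map $\iota:G\to G$ are continuous, and $H$ is closed under both. Given $a,b\in \cl(H)$, pick sequences $a_{k},b_{k}\in H$ with $a_{k}\to a$ and $b_{k}\to b$. Then $a_{k}b_{k}\in H$ converges to $ab$ by continuity of $\mu$, so $ab\in \cl(H)$; similarly $a_{k}^{-1}\in H$ converges to $a^{-1}$, so $a^{-1}\in \cl(H)$. The same limiting argument, applied to the identity $a_{k}b_{k}=b_{k}a_{k}$ valid in $H$, yields $ab=ba$, so $\cl(H)$ is Abelian. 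Next, $\cl(H)$ is closed in $G$ by construction, and $G$ is compact, so $\cl(H)$ is compact. Cartan's theorem then upgrades the closed subgroup $\cl(H)$ to a Lie subgroup of $G$.

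For connectedness I would argue directly: $H$ is connected by hypothesis, and the closure of any connected subset of a topological space is connected, so $\cl(H)$ is connected. Combining these ingredients, $\cl(H)$ is a connected, compact, Abelian Lie subgroup of $G$, and the classification theorem identifies it with a torus $\mathbb{T}^{n}$. In the intended application the group $H$ is the one-parameter subgroup generated by a nowhere vanishing Killing field, so it is nontrivial, forcing $n\geq 1$.

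There is really no serious obstacle here; the main point is simply that closure preserves each of the relevant algebraic and topological properties. The only minor subtlety worth being explicit about is the use of continuity of $\mu$ and $\iota$ to carry commutativity and the subgroup axioms from $H$ to $\cl(H)$, since this is exactly what prevents the argument from degenerating into a set-theoretic statement; everything else is a direct citation of the four background facts listed just before the proposition.
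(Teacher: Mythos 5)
Your proposal is correct and follows essentially the same route as the paper's proof: closure of a connected set is connected, a closed subset of a compact group is compact, commutativity and the subgroup axioms pass to the closure by continuity of the group operations, and the classification of compact connected Abelian Lie groups finishes the argument. The only differences are cosmetic — you explicitly check closure under inversion and remark that $n\geq 1$ comes from nontriviality of $H$ in the intended application, both of which the paper leaves implicit.
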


\begin{proof}
Since the closure of a connected set is connected, then $\cl(H)$ is connected. As $G$ is compact and $\cl(H)$ is closed then $\cl(H)$ is compact. Let $g = \lim_{i \to \infty} g_i$ and $h = \lim_{i \to \infty} h_i$ with $g_i\in H$, and $h_i\in H$. Then, since the group is Abelian and multiplication is continuous we have $hg=gh= \lim_{i \to \infty} g_i h_i\in  \cl(H)$. Hence $\cl(H)$ is an Abelian subgroup of $G$. Thus $\cl(H)$ is a compact, connected Abelian subgroup of $G$, hence a Lie group and thus isomorphic to $\mathbb{T}^{n}$.
\end{proof}

If $H$ is a group acting over a Riemannian manifold $(M,g)$, then the $H$-orbit of a point $p \in M$, $\{h.p:h \in H\}$, will be denoted by $\orb_H(p)$. Its isotropy group at $p$, $ \{h \in H: h.p=p\}$, will be denoted by $H_p$.

The following proposition is well known but we include the proof for the sake of completeness.

\begin{prop}\label{POLO} Suppose that $H$ is a compact connected abelian subgroup of $\mathcal{I}(M)$ acting on $(M,g)$ transitively. Then, $(M,g)$ is a flat torus\footnote{That is, the quotient of $\mathbb{R}^{n}$ by a lattice.}.
\end{prop}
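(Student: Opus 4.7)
The plan is to exploit transitivity to identify $M$ as a torus on which a quotient of $H$ acts freely by translations, and then to use that a left-invariant metric on an abelian Lie group is flat. The argument splits into three natural steps.

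First, I would pin down the isotropy. Fix $p\in M$ and set $K:=H_{p}$. Since $H$ is a compact Lie group (by Myers--Steenrod together with Cartan's theorem applied inside $\ism(M)$) and $K$ is closed, $K$ is itself a Lie subgroup of $H$. The key observation, which is where abelianness enters essentially, is that for every $q=h\cdot p$ one has $H_{q}=hH_{p}h^{-1}=K$; by transitivity, $K$ is the common isotropy at every point of $M$.

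Second, I would identify $M$ as a torus. The orbit map $h\mapsto h\cdot p$ descends to an $H$-equivariant diffeomorphism $H/K\to M$. Because $K$ is normal in $H$ (every subgroup of an abelian group is), the quotient $\bar H:=H/K$ is a compact connected abelian Lie group; by the classification theorem for abelian Lie groups recalled above, $\bar H\cong \mathbb{T}^{m}$ with $m=\dm M$. The descended $\bar H$-action on $M$ has trivial isotropy at $p$ (namely $K/K$), hence is free, transitive, and still by isometries since the $H$-action was.

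Third, I would argue flatness. Transporting the Lie group structure of $\bar H$ to $M$ via the orbit map identifies $(M,g)$ with $\mathbb{T}^{m}$ equipped with a left-invariant Riemannian metric. For any pair of left-invariant vector fields $X,Y$ on an abelian Lie group one has $[X,Y]=0$, so the Koszul formula reduces to $\nabla_{X}Y=\tfrac{1}{2}[X,Y]=0$. Consequently the curvature tensor vanishes identically on a global frame of left-invariant fields, which forces $(M,g)$ to be a flat torus.

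The main conceptual point, and the one I would highlight, is the step showing that the isotropy $H_{p}$ is independent of $p$: it is exactly there that abelianness of $H$ is used, and it is that independence which endows $M$ with the structure of a compact abelian Lie group (and thus, via left-invariance, with a flat metric). The remaining items are routine consequences of standard Lie-theoretic facts already recorded in this section.
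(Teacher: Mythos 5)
Your proposal is correct and follows essentially the same route as the paper: both first use abelianness to show the isotropy group is the same at every point, then pass to the free transitive action of the torus $H/H_{p}$ to identify $M$ with $\mathbb{T}^{m}$. The only difference is cosmetic and lies in the final flatness step — you transport the group structure and invoke the Koszul formula for a (left-, hence bi-)invariant metric on an abelian Lie group, while the paper builds a global frame of commuting, linearly independent Killing fields and reads off local Euclidean coordinates; the two computations are the same fact in different clothing.
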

\begin{proof}
First note that $H_{p}=H_{q}$ for any $p$ and $q$ in $M$. Indeed, if $h.p=p$ and $q=g.p$ then $h.q=h.(g.p)=(h.g).p=(g.h).p=g.(h.p)=g.p=q$ so $h\in H_{q}$. Thus $H_{p}\subset H_{q}$, and reversing the role of $p$ and $q$, we get $H_{p}=H_{q}$. Hence $H/H_{p}$ acts freely and transitively by isometries on $(M,g)$. Now, $H/H_{p}$ is a compact, connected and Abelian Lie group, hence isomorphic to $\mathbb{T}^{n}$ for some $n\geq 1$. We can then say that $\mathbb{T}^{n}=\mathbb{S}^{1}_{\theta_{1}}\times\ldots\times \mathbb{S}^{1}_{\theta_{n}}$ acts freely and transitively on $(M,g)$ by isometries. Denote $\Theta:=(\theta_{1},\ldots,\theta_{n})$. Fixed $\Theta$, the map $p\rightarrow \Theta.p$ is an isometry, therefore the vector fields $X_{i}$, $i=1,\ldots,n$, on $M$, given by, 
\be
X_{i}(p)=\frac{d}{d\theta_{i}} \Theta.p\, \bigg|_{\Theta=0},
\ee
are Killing fields. Slightly abusing notation, we compute,
\be\label{X}
\frac{d}{d\theta_{i}} \Theta.p=X_{i}(\Theta.p)=d_{p}\Theta (X_{i}(p)).
\ee
In particular, if $X_{i}(p)=0$ at some $p$, then $X_{i}$ is identically zero on $M$ by the last equality. In such case, the map $\theta_{i}\rightarrow \theta_{i}.p$ is constant by the first equality, i.e. $\theta_{i}.p=p$ for all $\theta_{i}$, contradicting that the action is free. Now, the result of moving $p$ by $X_{i}$ an amount $\theta_{i}$, and then by $X_{j}$ an amount $\theta_{j}$, is $\theta_{j}\theta_{i}.p$, whereas the result of moving $p$ first by $X_{j}$ and then by $X_{i}$ is $\theta_{i}\theta_{j}.p$. Since $\mathbb{T}^{n}$ is abelian we conclude that the Killing fields $X_{i}$ and $X_{j}$ commute, for all $i,j$. Using (\ref{X}) again, it is deduced that the $X_{1}(p),\ldots,X_{n}(p)$ are linearly independent at all $p$. They define thus local Euclidean coordinates systems, proving that $(M,g)$ is flat. Finally, the map $\mathbb{T}^{n}\rightarrow M$, given by $\Theta\rightarrow \Theta.p$ is bijective and non-singular, showing that $M$ is diffeomorphic to $\mathbb{T}^{n}$.
\end{proof}

\begin{prop}\label{CLAU} Let $(M,g)$ be a smooth compact Riemannian manifold. Let H be a subgroup of $\ism(M)$, and let $p$ be a point in $M$. Then, 
\be
\orb_{\cl(H)} (p)= \cl(\orb_H(p)).
\ee
Furthermore, if $H$ is connected and Abelian, then for any $p\in M$, $\orb_{\cl(H)}(p)$ is either a point or an embedded flat torus. Finally, for any $q\in \orb_{\cl(H)}(p)$, the orbit $\orb_{H}(q)$ is dense in $\orb_{\cl(H)}(p)$. 
\end{prop}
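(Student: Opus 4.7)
The plan is to tackle the three assertions in order. For the set equality $\orb_{\cl(H)}(p) = \cl(\orb_H(p))$, I would argue by double inclusion. The inclusion $\subseteq$ is essentially a continuity statement: any $g \in \cl(H)$ is the limit of a sequence $g_i \in H$, so $g.p = \lim_{i \to \infty} g_i.p$ lies in $\cl(\orb_H(p))$. For the reverse inclusion $\supseteq$, the key ingredient is the compactness of $\ism(M)$ coming from the Myers--Steenrod theorem recalled at the beginning of the section: given $q = \lim_{i \to \infty} h_i.p$ with $h_i \in H$, pass to a subsequence $h_{i_k} \to h$ in $\ism(M)$. Necessarily $h \in \cl(H)$, and by continuity $q = h.p \in \orb_{\cl(H)}(p)$.

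For the second claim, assuming $H$ connected and Abelian, I would note that $\cl(H) \simeq \mathbb{T}^n$ by the first Proposition of this section, and that the orbit $\orb_{\cl(H)}(p)$ is an embedded submanifold of $M$ by the fourth preliminary fact listed before that proposition. The restricted action of $\cl(H)$ on this embedded submanifold is by isometries of the induced metric (invariance of the orbit plus the fact that isometries of $M$ restrict to isometries of an invariant submanifold), is transitive, and is by a compact connected Abelian group. Excluding the degenerate case where $p$ is fixed by $\cl(H)$ (so the orbit is a point), Proposition \ref{POLO} then applies verbatim to the orbit equipped with the induced metric, yielding the flat torus conclusion.

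The third claim is immediate from the first applied to the new basepoint $q$: since $q \in \orb_{\cl(H)}(p)$ implies $\orb_{\cl(H)}(q) = \orb_{\cl(H)}(p)$ (orbits are disjoint or equal), and Part 1 gives $\cl(\orb_H(q)) = \orb_{\cl(H)}(q)$, density follows. I do not anticipate a substantive obstacle; the only nontrivial input beyond what is already set up in the preliminaries is the subsequential convergence argument in Part 1, which is routine once compactness of $\ism(M)$ is granted. The separation into the fixed-point case and the torus case in Part 2 requires only minor care.
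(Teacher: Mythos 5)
Your proposal is correct and follows essentially the same route as the paper: both inclusions of the set equality rest on continuity of the action and compactness (your subsequential extraction is just the sequential form of the paper's observation that the orbit of the compact group $\cl(H)$ is closed), the torus statement is obtained by applying Proposition \ref{POLO} to the transitive isometric action of $\cl(H)\simeq\mathbb{T}^n$ on the orbit, and the density claim follows from the first part applied at $q$ (the paper verifies $\orb_{\cl(H)}(p)=\orb_{\cl(H)}(q)$ by an explicit inverse-sequence argument where you invoke the orbit partition, but this is the same content).
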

\begin{proof}

$\orb_{\cl(H)}(p)$ is closed because $\cl(H)$ is compact and group multiplication is continuous. Then, as $\orb_{H}(p)\subset \orb_{\cl(H)}(p)$ it follows that $\cl(\orb_H(p))\subset \cl(\orb_{\cl(H)}) = \orb_{\cl(H)}(p)$. To prove the other inclusion, fix $q \in \orb_{\cl(H)} (p)$. Let $\sigma \in \cl(H)$ such that $\sigma(p)=q$ and let $\sigma_i\in H$ such that $\lim_{i \to \infty} \sigma_i = \sigma$. Since the action of $\ism(M)$ on $M$ is continuous then $\sigma_i.p\rightarrow \sigma.p=q$. Therefore $q \in \cl(\orb_H(p))$.

If $H$ is connected and Abelian then $\cl(H)$ is compact, connected and Abelian and therefore isomorphic to a torus $\mathbb{T}^n$. Now, for any $p$, $\cl(H)$ acts on $\orb_{\cl(H)}(p)$ by isometries and transitively and by Proposition \ref{POLO}, $\orb_{\cl(H)}(p)$ is either a point or a flat embedded torus.

Finally, if $q\in \orb_{\cl(H)}(p)$ then there is $\sigma_i \in H$ such that $\sigma_i .p \rightarrow q$, so we have $\sigma_i^{-1}.q \rightarrow p$. Therefore $p\in \cl(\orb_{H}(q))=\orb_{\cl(H)}(q)$ and thus $\orb_{\cl(H)}(p)\subset \orb_{\cl(H)}(q)\subset \orb_{\cl(H)}(p)$.
\end{proof}

We recall now from \cite{alexandrino2015lie} the relevant notion of principal orbits and their properties. Let $H$ be a Lie subgroup of $\mathcal{I}(M)$ and $p$ a point in $M$. Let $T_{p}\orb_{H}(p)$ be the tangent space to $\orb_{H}(p)$ at $p$ and let $N_{p}$ be the perpendicular complement, so that $T_pM = T_p \orb_H(p) \oplus N_{p}$. Now, given $h\in H$, the map $p\rightarrow h.p$ is an isometry of $(M,g)$. If $h\in H_{p}$ then $h.p=p$ and so $dh.:T_{p}M\rightarrow T_{p}M$ is a linear isometry. As $T_{p} \orb_{H}(p)$ is invariant, then so is $N_{p}$. This induces an action $H_{p}\times N_{p}\rightarrow N_{p}$ called the {\it slice representation}. If this action is trivial then the orbit $\orb_{H}(p)$ is said to be {\it principal} (see Definition 3.73 and exercise 3.77 in \cite{alexandrino2015lie}). Principal orbits exist, have maximal dimension among the orbits of $H$ and the set $M_0$ defined as the union of such orbits is open and dense in $M$. Moreover, $M_0/H \subset M/H$ is a connected manifold. This is the {\it Principal Orbit Theorem} and a proof of it can be found in Theorem 3.82 of \cite{alexandrino2015lie}. An orbit that is not principal but has the same dimension as principal orbits is said exceptional. A non-exceptional and non-principal orbit is said to be singular. 

We are now ready to prove the main theorem of this section.

\begin{teo}\label{rm} Let $(M,g)$ be a smooth, 3-dimensional, compact and connected Riemannian manifold. Suppose that $V$ is a nowhere vanishing Killing vector field. Then, either,
\begin{enumerate}[\rm (I)]
\item\label{I} every orbit is closed, or,
\item\label{II} there are only two closed orbits, and every other orbit densely fills an embedded two-torus, or, 
\item\label{III} every orbit densely fills an embedded two-torus, or,
\item\label{IV} every orbit is dense in $M$.
\end{enumerate}
\end{teo}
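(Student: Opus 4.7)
The plan is to convert the problem into an isometric torus action on $M$ and then to apply the Principal Orbit Theorem. Let $H = \{\varphi_{z} : z \in \mathbb{R}\}$ be the flow of $V$ and put $G := \cl(H) \subset \ism(M)$; by the first proposition of this section, $G \cong \mathbb{T}^{n}$ for some $n \geq 1$. Proposition \ref{CLAU} tells us that every $G$-orbit $\orb_{G}(p)$ is an embedded flat torus inside which the $V$-orbit $\orb_{H}(p)$ is dense. Because $V$ is nowhere vanishing, every $V$-orbit is a one-dimensional immersed curve, so $\dim \orb_{G}(p) \in \{1,2,3\}$. The key remark is that if $\dim\orb_{G}(p) = 1$ then the dense $V$-orbit fills the whole circle and is therefore closed, while if $\dim\orb_{G}(p)\geq 2$ the $V$-orbit is properly dense and hence not closed.

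Applying the Principal Orbit Theorem to the isometric action of the compact Lie group $G$, let $k \in \{1,2,3\}$ be the common dimension of the principal $G$-orbits. If $k = 3$, a principal orbit is simultaneously open (by a dimension count) and closed (being a compact $G$-orbit), so by connectedness it fills $M$; every $V$-orbit is then dense in $M \cong \mathbb{T}^{3}$, giving case (\ref{IV}). If $k = 1$, the hypothesis $V\neq 0$ forces every $G$-orbit to have dimension exactly $1$, so every $V$-orbit is a closed circle, yielding case (\ref{I}).

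The cohomogeneity-one case $k = 2$ carries the bulk of the work. The principal-stratum quotient $M_{0}/G$ is a connected one-manifold and so is homeomorphic to either $\mathbb{S}^{1}$ or $\mathbb{R}$. If $M_{0}/G \cong \mathbb{S}^{1}$, then $M_{0}/G$ is clopen in the connected Hausdorff orbit space $M/G$ and hence equals it; all $G$-orbits are then principal two-tori and every $V$-orbit is dense in one of them, which is case (\ref{III}). If instead $M_{0}/G \cong \mathbb{R}$, the slice theorem applied at a non-principal orbit shows that such an orbit has a neighbourhood in $M/G$ of the form $[0,\varepsilon)$; since $M/G$ is compact, connected, one-dimensional, with interior homeomorphic to $\mathbb{R}$, it must be homeomorphic to $[0,1]$, producing exactly two non-principal orbits. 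A final look at the slice representation of the isotropy on the normal slice, using that $V(p)\neq 0$ lies in the tangent to the orbit and is fixed by the isotropy, shows that both non-principal orbits must be singular circles (on which $V$-orbits are closed) rather than exceptional two-tori, yielding case (\ref{II}).

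The subtlest step of the plan is the very last one: excluding the exceptional (non-principal, still two-dimensional) $G$-orbits among the two endpoints of $M/G$, since a priori such orbits would also contribute boundary points to the orbit space. Once this is done, the rest of the argument is a clean bookkeeping of orbit dimensions permitted by Proposition \ref{CLAU}.
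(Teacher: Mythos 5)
Your strategy is the same as the paper's: pass to $G=\cl(H)\cong\mathbb{T}^{n}$, invoke the Principal Orbit Theorem, and split according to the dimension $k$ of the principal orbits. The cases $k=1$, $k=3$, and the subcase $M_{0}/G\cong\mathbb{S}^{1}$ are handled exactly as in the paper and are fine. The gap sits precisely at the step you yourself flag as the subtlest: excluding exceptional (two-dimensional, non-principal) orbits at the endpoints of $M/G\cong[0,1]$. The justification you offer --- that $V(p)\neq 0$ is tangent to the orbit and fixed by the isotropy $G_{p}$ --- concerns only the action of $G_{p}$ on $T_{p}\orb_{G}(p)$ and puts no constraint on the slice representation, which lives on the one-dimensional normal space $N_{p}$. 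For a two-dimensional orbit in a three-manifold that representation is a homomorphism $G_{p}\to O(1)=\{\pm 1\}$, and the orbit is exceptional exactly when it is onto; since $G_{p}$ need not be connected (e.g.\ $G_{p}\cong\mathbb{Z}_{2}\subset\mathbb{T}^{n}$), nothing you have said rules this out. Indeed, for the standard $\mathbb{T}^{2}$-action on (M\"obius band)$\,\times\,\mathbb{S}^{1}$ the core torus is an exceptional orbit, the generating field is nowhere zero and tangent to it, and it is fixed by the isotropy --- so the mechanism you propose cannot be what forbids exceptional orbits.

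What is actually needed, and what the paper supplies, is a geometric argument: if $\orb_{G}(p)$ were a two-sided non-principal two-torus, take the parallel tori $T_{\epsilon}$ and $T_{-\epsilon}$ at distance $\epsilon$ on either side, with $T_{\epsilon}$ containing a principal point $s_{\epsilon}$, and observe that an element $g\in G_{p}$ acting as $-1$ on $N_{p}$ reverses minimizing normal geodesics and therefore sends $s_{\epsilon}$ to a point of $T_{-\epsilon}$; the connected orbit of $s_{\epsilon}$ would then meet both disjoint tori, a contradiction. You need to add this (or an equivalent) argument; without it the dichotomy between cases (\ref{II}) and (\ref{III}) is not established. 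Everything else in your proposal is correct and follows the paper's route.
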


\begin{proof} Let $H$ be connected Abelian group generated by $V$. Let $\cl(H)=:G$ that we know is isomorphic to $\mathbb{T}^{n}$, for some $n\geq 1$. Observe that as $V$ has no zeros then every $G$-orbit has dimension at least one. 

If the dimension of the principal $G$-orbits is one, then all the $G$-orbits have dimension one and are diffeomorphic to $\mathbb{S}^{1}$. Hence the $H$-orbits are closed and we are in case (\ref{I}).

If the dimension of the principal $G$-orbits is two, then $M_{0}/G$ is a connected one-manifold, therefore diffeomorphic to either $(0,1)$ or $\mathbb{S}^{1}$, and dense in $M/G$. Also, by Proposition \ref{CLAU} the principal fibers are two-tori. If $M_{0}/G$ is diffeomoprhic to $\mathbb{S}^{1}$ then every $G$-orbit is principal, therefore every $H$-orbit is dense in a two-tori, and we are in case (\ref{III}).

Let us assume that $M_{0}/G$ is diffeomorphic to $(0,1)$. We claim that in this case every non-principal orbit must be diffeomorphic to $\mathbb{S}^{1}$ (hence singular) and there must be only two of them. Suppose that $\orb_G(p)$ is a non-principal $G$-orbit of dimension two (i.e. a regular orbit). By Proposition \ref{CLAU}, $\orb_{G}(p)$ is an embedded two-torus. Let $\epsilon>0$ be small enough that the set of points at a distance $\epsilon$ from $\orb_{G}(p)$ are two embedded tori, $T_{\epsilon}$ and $T_{-\epsilon}$, at both sides of $\orb_{G}(p)$. Assume that, in addition, $\epsilon$ is chosen such that one of the tori contains a point $q$ whose $G$-orbit is principal (recall $M_{0}$ is dense). Then, such orbit must be either $T_{\epsilon}$ or $T_{-\epsilon}$ because $T_{\epsilon}\cup T_{-\epsilon}$ is preserved under the action of the isometries of $G$. Assume then that $T_{\epsilon}$ is principal. Let $\gamma$ be a length minimizing geodesic between $\orb_{G}(p)$ and $T_{\epsilon}$, starting at a point $r\in \orb_{G}(p)$ and ending at a point $s_{\epsilon}\in T_{\epsilon}$. If $\orb_{G}(p)$ is non-principal then there must exist $g\in G$ sending $\gamma'(0)$ to $-\gamma'(0)$, hence sending $s_{\epsilon}$ to a point $s_{-\epsilon}$ in $T_{-\epsilon}$. Therefore $T_{\epsilon}$ itself is not a $G$-orbit, reaching a contradiction. Thus, every non-principal orbit is diffeomorphic to $\mathbb{S}^{1}$. Let us see that there must be only two of them. Let $\orb_{G}(p)$ be an orbit diffeomorphic to $\mathbb{S}^{1}$. Let $\epsilon_{0}>0$ be small enough that for any $\epsilon\in (0,\epsilon_{0}]$ the set of points at a distance $\epsilon$ from $\orb_{G}(p)$ is an embedded two-torus $T_{\epsilon}$. Say $\epsilon_{0}$ is chosen such that $T_{\epsilon_{0}}$ is principal. Then, for any $0<\delta<\epsilon_{0}$ let $\gamma$ be a length minimizing geodesic between $T_{\epsilon_{0}}$ and $T_{\epsilon=\epsilon_{0}-\delta}$, starting at $r\in T_{\epsilon_{0}}$ and ending at $s\in T_{\epsilon}$. As the point $r$ orbits all over $T_{\epsilon_{0}}$ under the action of $G$, the point $s$ orbits all over $T_{\epsilon}$. Thus, the orbit of $s$ is two-dimensional, hence principal. It follows that near $\orb_{G}(p)$ every orbit is principal. In particular, the $G$-quotient of the $\epsilon_{0}$-tubular neighbourhood of $\orb_{G}(p)$ is naturally diffeomorphic to $[0,\epsilon_{0})$. We conclude that $M/G$ must be diffeomorphic to $[0,1]$ and that there are only two singular orbits, one over $\{0\}$ and the other over $\{1\}$. We are thus in case (\ref{II}).

If the dimension of the principal orbits is three then there is only one principal orbit equal to $M$ itself and must be a flat three-torus by Proposition \ref{CLAU}. Also by this proposition, every $H$-orbit is dense in $M$.
\end{proof}

As a consequence of the arguments shown in the proof, the topology of such spaces can be classified. For the definition of Seifert manifold, lens space, and torus bundle see \cite{MR705527}.
\begin{cor}\label{cr}
Under the hypothesis and notation of Theorem \ref{rm}, we obtain the following topological classification. 
\begin{enumerate}[\rm (I')]
\item If {\rm (\ref{I})} holds, then $M$ is a Seifert manifold.
\item If {\rm (\ref{II})} holds, then $M$ is a lens space,
\item If {\rm (\ref{III})} holds, then $M$ is a  $\mathbb{T}^2$-bundle over $\mathbb{S}^1$,
\item If {\rm (\ref{IV})} holds,  $M$ is a three-torus $\mathbb{T}^{3}$.
\end{enumerate}
\end{cor}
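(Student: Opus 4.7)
The plan is to read off the topology of $M$ in each of the four cases from the orbit-and-quotient data already collected in the proof of Theorem~\ref{rm}; no new dynamical input is needed, only standard facts about compact group actions and $3$-manifolds.

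For case (\ref{IV}), the group $G=\cl(H)$ acts transitively on $M$ with a $3$-dimensional orbit, so Proposition~\ref{POLO} applies verbatim and gives $M\cong \mathbb{T}^3$. For case (\ref{III}), every $G$-orbit is principal and is an embedded flat $2$-torus by Proposition~\ref{CLAU}, while $M/G\cong \mathbb{S}^1$; the orbit map $M\to M/G$ is then a smooth submersion with compact mutually diffeomorphic fibers, so Ehresmann's fibration theorem identifies $M$ as a locally trivial $\mathbb{T}^2$-bundle over $\mathbb{S}^1$.

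For case (\ref{II}), the proof of Theorem~\ref{rm} exhibited $M/G\cong [0,1]$ with a singular orbit $\orb_G(p_i)\cong \mathbb{S}^1$ over each endpoint $i=0,1$ and principal $2$-torus orbits over the interior. I would apply the slice theorem for compact Lie group actions to realize a $G$-invariant tubular neighborhood of each $\orb_G(p_i)$ as an associated bundle $G\times_{G_{p_i}} N_{p_i}$; since $\orb_G(p_i)\cong \mathbb{S}^1$ and $N_{p_i}$ is a $2$-disk, such a neighborhood is a solid torus. The two solid tori exhaust $M$ and meet along their common boundary $2$-torus, producing a genus-one Heegaard splitting of $M$. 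By the classical classification of closed $3$-manifolds admitting such a splitting, $M$ is a lens space (in the broad sense including $\mathbb{S}^3$ and $\mathbb{S}^1\times \mathbb{S}^2$).

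The main obstacle is case (\ref{I}). Every orbit of $V$ is a closed circle, so $M$ is smoothly foliated by circles; the real task is to upgrade this foliation to a smooth locally free $\mathbb{S}^1$-action. I would invoke Epstein's theorem, which ensures a uniform upper bound on the lengths of the leaves of any foliation of a compact $3$-manifold by circles, and use it to reparametrize $V$ into a complete vector field of constant period whose flow descends to a smooth locally free $\mathbb{S}^1$-action on $M$ with the same orbits. A classical result of Raymond on smooth $\mathbb{S}^1$-actions on compact $3$-manifolds then identifies $M$ as a Seifert fibered manifold whose fibers are the $V$-orbits, yielding (I').
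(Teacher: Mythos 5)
Your proposal is correct and follows the same overall strategy as the paper, namely reading off the topology of $M$ from the orbit data assembled in the proof of Theorem \ref{rm}; the difference is one of detail and of tools in case (\ref{I}). The paper's own proof simply declares cases (\ref{I}), (\ref{III}) and (\ref{IV}) immediate, and for (\ref{II}) glues the two solid tori $\pi^{-1}([0,1/2])$ and $\pi^{-1}([1/2,1])$ along their common boundary torus to obtain a lens space; your slice-theorem identification of the invariant neighborhoods of the two singular circle orbits as solid tori, together with the resulting genus-one Heegaard splitting, is exactly the justification hiding behind that one line, and your Ehresmann argument in case (\ref{III}) likewise just makes the ``immediate'' explicit. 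The only place where you take a genuinely different route is case (\ref{I}): you discard the isometric structure and treat the orbits as a bare smooth foliation by circles, invoking Epstein's boundedness theorem to build a locally free $\mathbb{S}^1$-action and then Raymond's classification. This works, but it is heavier machinery than the situation calls for: the circles are the orbits of the compact group $G=\cl(H)\cong\mathbb{T}^{n}$ acting isometrically, so the slice theorem already equips every orbit with an invariant fibered solid-torus neighborhood, which is precisely the definition of a Seifert fibration, with no appeal to Epstein needed. Your route has the virtue of not using the Riemannian structure at all (it would apply to any $C^{1}$ circle foliation of a compact $3$-manifold), at the price of quoting two substantial external theorems where a single application of the slice theorem to the $G$-action suffices.
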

\begin{proof}
Cases (\ref{I}), (\ref{III}), and (\ref{IV}) are immediate. Also, if (\ref{II}) holds, then $M$ results after gluing the two solid tori $\pi^{-1}([0,1/2])$ and $ \pi^{-1}([1/2,1])$ along their boundaries and is therefore a lens space.
\end{proof}

\section{Proofs of the main results}
As earlier let $\Hr$ be a Cauchy horizon inside a smooth, vacuum, time-oriented $n+1$-dimensional spacetime $(M,g)$. Assume that $\Hr$ is compact and non-degenerate. Let $h$ denote the degenerate metric on $\Hr$ inherited from $g$.

Let us mention a few well known facts about null vector fields $Z$, tangent to $\Hr$.
First, since $\Hr$ is totally geodesic, then given any such nowhere zero $Z$ we have,
\be
g(\nabla_X Z, Y)=0,
\ee
for any $X,Y \in T\Hr$. This permits the definition of a smooth one-form $\omega_Z$ by,
\be
\nabla_X Z=: \omega_Z(X)Z,
\ee
for every $X$ in $T\Hr$. A crucial property of the forms $\omega_Z$ is that they are \textit{null closed}, meaning that,
\be
d\omega_Z(Z,\cdot) = 0.
\ee
(see for instance \cite{moncrief2020symmetries} and references therein). Second, for any such nowhere zero $Z$ it is,
\be
\mathcal{L}_{Z}h=0,
\ee
which follows after,
\be\label{eq1}
\mathcal{L}_Z h (X,Y)=  g(\omega_Z(X)Z,Y)+g(\omega_Z(Y),X)=0,
\ee
for any $X$ and $Y$ tangent to $\Hr$.

Now, it was recently shown in \cite{reiris2020existence} that one can always find a smooth, nowhere vanishing null vector field $V$ in $\Hr$ such that,
\be
\nabla_V V = -V.
\ee
This implies that $\omega_V(V)=-1$, and therefore that $\omega_V$ is nowhere vanishing. The Petersen's metric $\sigma$ (see (5) in \cite{petersen2018wave}) is the Riemannian metric over $\Hr$ defined by,
\be
\sigma(X,Y) = h(X,Y) + \omega_V(X)\omega_V(Y).
\ee
The crucial fact proved by Petersen is that $V$ is a Killing field for $\sigma$. A shorter proof of this fact than the one presented in Theorem 1.14 of \cite{petersen2018wave} can be given using Cartan's formula as shown in the next Proposition.
\begin{prop}[Petersen \cite{petersen2018wave}]\label{OPO} The vector field $V$ is a Killing vector field of $(\Hr,\sigma)$.
\end{prop}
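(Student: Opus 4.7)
My plan is to unpack $\mathcal{L}_V \sigma$ into three pieces and show each vanishes using only the facts already collected in the excerpt.

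First I would write
\be
\mathcal{L}_V \sigma = \mathcal{L}_V h + (\mathcal{L}_V \omega_V)\otimes \omega_V + \omega_V \otimes (\mathcal{L}_V \omega_V),
\ee
by the Leibniz rule for the Lie derivative on tensor products. The first term is zero by the general identity $\mathcal{L}_Z h = 0$ for nowhere vanishing null tangent fields $Z$, which is precisely equation (\ref{eq1}). So the entire problem reduces to showing $\mathcal{L}_V \omega_V = 0$.

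For this, I would apply Cartan's magic formula to the one-form $\omega_V$:
\be
\mathcal{L}_V \omega_V = \iota_V d\omega_V + d(\iota_V \omega_V).
\ee
The first summand is the one-form $X \mapsto d\omega_V(V,X)$, which vanishes identically because $\omega_V$ is \emph{null closed}, a property already recalled in the paragraph preceding the proposition. The second summand is the differential of the function $\iota_V \omega_V = \omega_V(V)$, and by the normalization $\nabla_V V = -V$ we have $\omega_V(V) = -1$, a constant; hence $d(\iota_V\omega_V) = 0$. Combining, $\mathcal{L}_V\omega_V = 0$, and therefore $\mathcal{L}_V\sigma = 0$, proving $V$ is Killing for $\sigma$.

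There is no real obstacle here; the whole point is that the Cartan formula trivializes the computation once one knows $\omega_V(V)$ is constant and that $\omega_V$ is null closed. The only thing to double-check is that the null-closed property indeed gives the vanishing of $\iota_V d\omega_V$ as a one-form on all of $T\Hr$, which is exactly its definition.
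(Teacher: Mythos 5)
Your proof is correct and follows essentially the same route as the paper: both reduce the problem to $\mathcal{L}_V h=0$ (equation (\ref{eq1})) together with $\mathcal{L}_V\omega_V=0$, and both establish the latter via Cartan's formula using the null-closed property $d\omega_V(V,\cdot)=0$ and the constancy $\omega_V(V)=-1$. No gaps.
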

\begin{proof}
Applying Cartan's formula, for any pair of vector fields $X,Y \in T\Hr$ we find,
\begin{align}
\nonumber \mathcal{L}_V \sigma (X,Y) &= \mathcal{L}_V h(X,Y) + \mathcal{L}_V (\omega_V \omega_V) (X,Y) \\
\nonumber &= \mathcal{L}_V (\omega_V)\omega_V(X,Y) + \omega_V  \mathcal{L}_V \omega_V (X,Y) \\
\nonumber &=  \big[\big((\iota_V \circ d) \omega_V + (d \circ \iota_V) \omega_V \big) \omega_V  + \omega_V \big((\iota_V \circ d) \omega_V + (d \circ \iota_V) \omega_V \big) \big](X,Y)\\
\nonumber &= \big[d\omega_V (V,X) +d(\omega_V(V)) (X) \big]\omega_V(Y)+\omega_V(X)\big[d\omega_V (V,Y) +d(\omega_V(V)) (Y) \big]\\
&= 0,
\end{align}
where the second equality follows from (\ref{eq1}) and the last equality follows from the properties of the one-form $\omega_{V}$ previously discussed.
\end{proof}

The proof of the Theorem \ref{princ} and of Corollary \ref{cprinc} is now straightforward from Proposition \ref{OPO} and from the Riemannian results previously obtained.
\begin{proof}[Proof of Theorem \ref{princ} and Corollary \ref{cprinc}]
Theorem \ref{princ} follows by applying Theorem \ref{rm} to $(\Hr, \sigma)$ and to the Killing vector field $V$. Corollary \ref{cprinc} now also follows immediately from \ref{cr}, without any further changes.
\end{proof}

Finally, we prove Corollary \ref{Kasner}.

\begin{proof}[Proof of Corollary \ref{Kasner}] We use Theorem 1.2 in \cite{petersen2018symmetries} to extend the vector field $V$ to a Killing vector field for $g$ in the globally hyperbolic region $H$ of $M\setminus \Hr$, (we denote the extension still with the letter $V$). Let $Z$ be the only past directed null vector field on $\Hr$ perpendicular to the kernel of $\omega$ at every point $p\in \Hr$, and such that $g(Z,V)=1$. The field $Z$ points into $H$. As $\mathcal{L}_{V}V=0$, $\mathcal{L}_{V}\omega=0$ and $\mathcal{L}_{V}g=0$ it follows that $\mathcal{L}_{V}Z=0$.

Given $p\in \Hr$ let $\gamma_{p}(s)$ be the null geodesic starting at $p$ with velocity $Z(p)$. For $\tau$ small, let $\psi_{\tau}:\Hr\rightarrow M$ be defined by $\psi_{\tau}(p)=\gamma_{p}(\tau)$. If $\tau_{0}$ is small enough then $\Sigma:=\psi_{\tau_{0}}(\Hr)$ is an embedded hypersurface in $H$ and $\psi_{\tau_{0}}:\Hr\rightarrow \Sigma$ is a diffeomorphism. As $V$ is a spacetime Killing field and $Z$ is invariant under the flow of $V$, it follows that $d\psi_{\tau_{0}}(V(p))=V(\psi_{\tau_{0}}(p))$, (recall that, as $V$ is a Killing field, it takes geodesics into geodesics). In particular $V$ is tangent to $\Sigma$. Hence, if the orbits of $V$ on $\Hr$ are dense, so are the orbits of $V$ on $\Sigma$. As $V$ is spacelike near $\Hr$ in $H$ (see, \cite{petersen2018symmetries}), it follows directly that $\Sigma$ (for $\tau_{0}$ small enough) is a spacelike hypersurface. Since $V$ is a spacetime Killing field and leaves $\Sigma$ invariant, then it is also a Killing field of the (Riemannian) metric $\overline{h}$ that $g$ induces on $\Sigma$. Thus, by Proposition \ref{POLO}, we conclude that $\mathbb{T}^{n}$ acts freely and isometrically on $(\Sigma;h)$, and that $(\Sigma,\overline{h})$ is a flat $n$-torus. Since the $\mathbb{T}^{n}$- isometries come from the spacetimes symmetries induced by $V$, it follows that the second fundamental form $K$ of $\Sigma$ is also invariant under the action of $\mathbb{T}^{n}$.

Now, any $\mathbb{T}^{n}$-invariant vacuum initial data $(\Sigma\sim \mathbb{T}^{n};\overline{h},K)$ gives rise to a Kasner spacetime but the only Kasner spacetime with a Cauchy horizon is the flat Kasner. The Corollary follows.
\end{proof}

\bibliographystyle{plain}
\bibliography{Master}

\end{document}